\newcommand{\sXX}[1]{\added[id=so,remark={}]{#1}}
\newtheorem{theorem}{Theorem}
\newtheorem{definition}[theorem]{Definition}
\DeclareMathOperator*{\argmax}{arg\,max}
\title{\LARGE \bf Learning for Multi-robot Cooperation in Partially Observable Stochastic Environments with Macro-actions}
\author{Miao Liu$^{1}$, Kavinayan Sivakumar$^{2}$, Shayegan Omidshafiei$^{3}$, Christopher Amato$^{4}$ and Jonathan P. How$^{3}$
\thanks{$^{1}$Miao Liu is with IBM T. J. Watson Research Center, Yorktown Heights, NY, USA {\tt\small miao.liu1@ibm.com}}%
\thanks{$^{2}$Kavinayan Sivakumar is with the Department of Electrical Engineering, Princeton University, Princeton, NJ, USA {\tt\small ks16@princeton.edu}}%
\thanks{$^{3}$ Shayegan Omidshafiei and Jonathan P. How are with Laboratory of Information and Decision Systems, Massachusetts Institute of Technology, Cambridge, MA, USA {\tt\small \{shayegan, jhow\}@mit.edu}}%
\thanks{$^{4}$Christopher Amato is with the College of Computer and Information Science, Northeastern University, Boston, MA, USA {\tt\small camato@ccs.neu.edu}}%
}
\begin{document}

\maketitle
\thispagestyle{empty}
\pagestyle{empty}

\begin{abstract}
This paper presents a data-driven approach for multi-robot coordination in partially-observable domains based on Decentralized Partially Observable Markov Decision Processes (Dec-POMDPs) and macro-actions (MAs). Dec-POMDPs provide a general framework for cooperative sequential decision making under uncertainty and MAs allow temporally extended and asynchronous action execution. To date, most methods assume the underlying Dec-POMDP model is known a priori or a full simulator is available during planning time. Previous methods which aim to address these issues suffer from local optimality and sensitivity to initial conditions. Additionally, few hardware demonstrations involving a large team of heterogeneous robots and with long planning horizons exist. This work addresses these gaps by proposing an iterative sampling based Expectation-Maximization algorithm (iSEM) to learn polices 
using only trajectory data containing observations, MAs, and rewards. Our experiments show the algorithm is able to achieve better solution quality than the state-of-the-art learning-based methods. We implement two variants of multi-robot Search and Rescue (SAR) domains (with and without obstacles) on hardware to demonstrate the learned policies can effectively control a team of distributed robots to cooperate in a partially observable stochastic environment.
\end{abstract}

\section{Introduction}

There has been significant progress in recent years on developing cooperative multi-robot systems that can operate in real-world environments with uncertainty. 
Example applications of social and economical interest include search and rescue (SAR) \cite{grayson2014search},  traffic management for smart cities \cite{dresner2006multiagent}, planetary navigation \cite{bernstein2001planetary}, robot soccer \cite{jolly2007intelligent}, e-commerce and transport logistic processes~\cite{gath2016optimizing}. 
 Planning in such environments must address numerous challenges, including imperfect models and knowledge of the environment, restricted communications between robots, noisy and limited sensors, different viewpoints by each robot, asynchronous calculations, and computational limitations. 
 
These planning problems, in the most general form, can be formulated as a Decentralized Partially Observable Markov Decision Process (Dec-POMDP)~\cite{Book16}, a general framework for cooperative sequential decision making under uncertainty. In Dec-POMDPs, robots make decisions based on local streams of information (i.e., observations), such that the expected value of the team (e.g., number of victims rescued, average customer satisfaction) is maximized. However, representing and solving Dec-POMDPs is often intractable for large domains, because finding the optimal (even $epsilon$-approximate) solution of a Dec-POMDP (even for finite horizon) is NEXP-complete~\cite{Book16}. 
To combat this issue, recent research has addressed the more scalable macro-action based Dec-POMDP (MacDec-POMDP), where each agent has temporally-extended actions, which may require different amounts of time to complete~\cite{amato2014planning}. Moreover, significant progress has been made on demonstrating the usefulness of MacDec-POMDPs via a range of challenging robotics problems, such as a warehouse domain~\cite{AmatoICRA15}, bartending and beverage service~\cite{CA:RSS15}, and package delivery~\cite{ShayeganICRA15,omidshafiei2016graph}. However, current MacDec-POMDP methods require knowing domain models a priori. Unfortunately, for many real-world problems, such as SAR, the domain model may not be completely available. Recently, researchers started to address this issue via reinforcement learning and proposed a policy-based EM algorithm (PoEM)~\cite{Liu:AAAI16}, which can learn valid controllers via only trajectory data containing observations, macro-actions (MAs), and rewards.  

Although PoEM has convergence guarantees for the batch learning setting and can recover optimal policies for benchmark problems with sufficient data, it suffers from local optimality and sensitivity to initial conditions for complicated real-word problems. Inevitably, as an EM type algorithm, the results of PoEM can be arbitrarily poor given bad initialization. 
Additionally, few hardware demonstrations based on challenging tasks such as SAR, which involves a large team of heterogeneous robots (both ground vehicles and aerial vehicles) and with MacDec-POMDP formulation exists. 
This paper addresses these gaps by proposing an iterative sampling-based Expectation-Maximization algorithm (iSEM) to learn polices. Specifically, this paper extends previous approaches by using concurrent (multi-threaded) EM iterations providing feedback to one another to enable re-sampling of parameters and reallocation of computational resources for threads which are clearly converging to poor values.

The algorithm is tested in batch learning settings, which is commonly used in learning from demonstration. 
Through theoretical analysis and numerical comparisons on a large multi-robot SAR domain, we demonstrate the new algorithm can better explore the policy space. As a result, iSEM is able to achieve better expected values compared to the state-of-the-art learning-based method, PoEM.  Finally, we present an implementation of two variants of multi-robot SAR domains (with and without obstacles) on hardware to demonstrate the learned policies can effectively control a team of distributed robots to cooperate in a partially observable stochastic environment.

\section{Background}

We first discuss the background on Dec-POMDPs and MacDec-POMDPs and then describe the PoEM algorithm. 

\subsection{Dec-POMDPs and MacDec-POMDPs}\label{sec:model-framework}
Decentralized POMDPs (Dec-POMDPs) generalize POMDPs to the multiagent, decentralized setting \cite{Book16,Bernstein:MOR-2002}. Multiple agents operate under uncertainty
based on partial views of the world, with execution unfolding over a bounded or unbounded number of steps. At
each step, every agent chooses an action (in parallel) based on locally observable information and then receives a new observation. The agents share a joint reward based on their joint concurrent actions, making the problem cooperative. However, agents' local views mean that execution is decentralized.

Formally, a \textbf{Dec-POMDP} is represented as an octuple $\langle N, A, S, Z, T, \Omega, R, \gamma \rangle$, where $N$ is a finite set of agent indices; $A=\otimes_{n} A_n$ and $Z=\otimes_n Z_n$ respectively are sets of joint actions and observations, with $A_n$ and $Z_n$ available to agent $n$. At each step, a joint action $\vec{a}=(a_1, \cdots, a_{|N|})\in A$ is selected and a joint observation $\vec{z}=(z_1, \cdots, z_{|N|})$ is received; $S$ is a set of finite world states; $T : S\times A\times S\rightarrow [0,1]$ is the state transition function with $T(s'|s,\vec{a})$ denoting the probability of transitioning to $s'$ after taking joint action $\vec{a}$ in $s$; $\Omega : S\times A\times Z\rightarrow [0,1]$ is the observation function with $\Omega(\vec{z}|s', \vec{a})$ the probability of observing $\vec{o}$ after taking joint action $\vec{a}$ and arriving in state $s'$; $R : S\times A\rightarrow{\mathbb{R}}$ is the reward function with $r(s, \vec{a})$ the immediate reward received after taking joint action $\vec{a}$ in $s$; $\gamma\in[0,1)$ is a discount factor. 
Because each agent lacks access to other agents' observations, each agent maintains a local policy $\pi_n$, defined as a mapping from local observation histories to actions. A joint policy consists of the local policies of all agents. For an infinite-horizon Dec-POMDP with initial state $s_0$, the objective is to find a joint policy $\pi=\otimes_n \pi_n$, such that the value of  $\pi$ starting from $s_0$, $V^\pi(s_0) = \mathbb{E}\big[\sum_{t=0}^{\infty}\gamma^tr(s_t, \vec{a}_t)|s_0, \pi\big]$, is maximized. Specifically, given $h_t = \{a_{0:t-1},z_{0:t}\}\in H_n$, the history of actions and observations up to $t$, the policy $\pi_n$ probabilistically maps $h_t$ to $a_t$: $H_n\times A_n\rightarrow [0,1]$.

A MacDec-POMDP with (local) macro-actions extends the MDP-based options~\cite{sutton1999between} framework to Dec-POMDPs. Formally, a \textbf{MacDec-POMDP} is defined as a tuple $\langle N,A,M,S,Z,O,T,\Omega,R,\gamma \rangle$, where $N,A,S,Z,T,\Omega,R$ and $\gamma$ are the same as defined in the Dec-POMDP; $O=\otimes O_n$ are sets of joint macro-action observations which are functions of the state; $M=\otimes M_n$ are sets of joint macro-actions, with $M_n\!=\!\langle I_{n}^m,\beta_{n}^m,\pi_{n}^m\rangle$, where $I_n^m\!\subset\!H_n^M$ is the initiation set that depends on macro-action observation histories, defined as $h_{n,t}^M=\{o_n^0, m_n^1, \cdots,o_n^{t-1}, m_n^t\}\in H_{n}^M$, $\beta_{n}^m\!:\!S\rightarrow[0,1]$ is a stochastic termination condition that depends on the underlying states, and $\pi_{n}^m\!:\!H_n\times M_n\rightarrow[0,1]$ is an option policy for macro-action $m$ ($H_n$ is the space of history of primitive-action and observation). Macro-actions are natural representations for robot or human operation for completing a task (e.g., navigating to a way point or placing an object on a robot).
MacDec-POMDPs can be thought of as decentralized partially observable \emph{semi-}Markov decision processes (Dec-POSMDPs) \cite{CA:RSS15,ShayeganICRA15}, because it is important to consider the amount of time that may pass before a macro-action is completed.
 The high level policy for each agent $\Psi_n$, can be defined for choosing macro-actions that depends on macro-action observation histories. Given a joint policy, the primitive action at each step is determined by the high-level policy that chooses the MA, and the MA policy that chooses the primitive action. The joint high level policies and macro-action policies can be evaluated as: $V^\Psi(s_0) = \mathbb{E}\big[\sum_{t=0}^{\infty}\gamma^tr(s_t, \vec{a}_t)|s_0, \pi,\Psi\big]$~\footnote{Note that MacDec-POMDPs allows asynchronous decision making, so synchronization issues must be dealt with by the solver as part of the optimization. Some temporal constraints (e.g., timeouts) can be encoded into the termination condition of a macro-action.}. 
 
\subsection{Solution Representation} 
A Finite State Controller (\textbf{FSC}) is a compact way to represent a policy as a mapping from histories to actions. Formally, a stochastic FSC for agent $n$ is defined as a tuple $\Theta_n = \langle Q_{n},M_n,O_n,\delta_{n},\lambda_{n},\mu_{n}\rangle$, where, $Q_{n}$ is the set of nodes\footnote{A controller node can be understood as a decision state (summary of history). They are commonly used for policy representation when solving infinite horizon POMDPs~\cite{kaelbling1998planning} and Dec-POMDPs~\cite{Book16}.}; 
$M_n$  and $O_n$ are the output and input alphabets (i.e., the macro-action chosen and the observation seen); $\delta_{n}:Q_{n}\times O_n \times Q_{n} \rightarrow [0, 1]$ is the node transition probability, i.e., $\delta_n(q,o,q') = \mathrm{Pr}(q'|q,o)$; $\lambda_{n}^{0}:Q_{n}\times M_n \rightarrow [0, 1]$ is the output probability for node $q_{n,0}$, such that $m_{n,0} \sim \lambda_{n}^{0}(q_{n,0},m_{n,0})=\mathrm{Pr}(m_{n,0}|q_{n,0})$; $\lambda_{n}:Q_{n}\times O_n \times M_n \rightarrow [0, 1]$ is the output probability for nodes $\ne q_{n,0}$ that associates output symbols with transitions, i.e. $m_{n,\tau}\sim\lambda_{n}(q_{n,\tau},o_{n,\tau}, m_{n,\tau}) = \mathrm{Pr}(m_{n,\tau}|q_{n,\tau},o_{n,\tau})$; $\mu_{}: Q_{n} \rightarrow[0, 1]$ is the initial node distribution $q_{n,0}\sim \mu_{n}=\mathrm{Pr}(q_{n,0})$. This type of FSC is called a Mealy machine~\cite{amato2010finite}, where an agent's local policy for action selection $\lambda_n(q, o, m)$ depends on both current controller node (an abstraction of history) and immediate observation. By conditioning action selections on immediate observations, a Mealy machine can use this observable information to help ensure a valid macro-action controller is constructed~\cite{Liu:AAAI16}. 

\subsection{Policy Learning Through EM}
A Dec-POMDP problem can be transformed into an \textbf{inference problem} and then efficiently solved by an EM algorithm. Previous EM methods~\cite{kumar2015probabilistic,Song:AAAI16} have achieved success in scaling to larger problems, but these methods require a Dec-POMDP model both to construct a Bayes net and to evaluate policies. When the exact model parameters $T$, $\Omega$ and $R$ are unknown, a Reinforcement Learning (RL) problem must be solved instead. To this end, EM has been adapted to model-free RL settings to optimize FSCs for Dec-POMDPs~\cite{WU:IJCAI2013,Liu:IJCAI2015} and MacDec-POMDPs~\cite{Liu:AAAI16}. 

For both purposes of self-containment and ease of analyzing new algorithm, we first review the policy based EM algorithm (PoEM) developed for the MacDec-POMDP case~\cite{Liu:AAAI16}. 

\begin{definition} (Global empirical value function) \label{def:obj-value}Let $\mathcal{D}^{(K)}\!\!=\!\!\{(\vec{o}^k_{0}, \vec{m}^k_{0}, r^k_{0},\cdots\vec{o}^k_{T_k}, \vec{m}^k_{T_k}, r^k_{ T_k})\}_{k=1}^{K}$ 
be a set of episodes resulting from $|N|$ agents who choose macro-actions according to $\Psi\!\!=\!\!\otimes_n\Psi_n$, a set of arbitrary stochastic policies with $p^{\Psi_{n}}(m|h)>0$, $\forall$ action $m$, $\forall$ history $h$. The global empirical value function is defined as 
\begin{equation}\label{eq:emprical_value}
\!\!\hat{V}\big(\mathcal{D}^{(K)};\Theta\big)\!\!\stackrel{def.}{=}\!\!\frac{1}{K}\sum_{k=1}^K\sum_{t=0}^{T_k}\gamma^tr_t^k\prod_{n=1}^{N}\frac{p(m^k_{n, 0:t}|h^k_{n,t},\Theta_n)}{p^{\Psi_n}(m^k_{n, 0:t}|h^k_{n,t})}
\end{equation} 
where $h_{n,t}^k\!=\!(m^k_{n,0:t-1},o^k_{n,1:t})$, $0\!\leq\!\gamma<1$ is the discount.
\end{definition}

Definition~\ref{def:obj-value} provides an off-policy learning objective: given data $\mathcal{D}^{(K)}$ generated from a set of behavior policies $\Psi$, find a set of parameters $\Theta\!=\!\{\Theta_i\}_{i=1}^{|N|}$ such that $\hat{V}\big(\mathcal{D}^{(K)};\!\Theta\big)$ is maximized. Here, we assume a factorized policy representation $p(\vec{m}_{0:\tau}^k|\vec{h}_{1:\tau},\Theta) = \prod_{n=1}^{|N|} p(m^k_{n, \tau}|h^k_{n,\tau},\Theta_n)$ to accommodate decentralized policy execution.

\subsection{PoEM}
Direct maximization of $\hat{V}\big(\mathcal{D}^{(K)};\Theta\big)$ is difficult; instead, $\hat{V}\big(\mathcal{D}^{(K)};\Theta\big)$ can be augmented with controller node sequences $\{\vec{q}_{0:t}^{\;k}:k=1\dots,K, t = 1 : T_k\}$ and maximize the lower bound of the logarithm of $\hat{V}\big(\mathcal{D}^{(K)};\Theta\big)$ (obtained by Jensen's inequality):
\begin{eqnarray}
	\label{eq:lb-ln-obj}
	&&\hspace{-1.25cm}\ln\hat{V}\big(\mathcal{D}^{(K)};\Theta\big)=\ln\mbox{$\sum\limits_{k,t,\vec{q}_{0:t}^{\;k}}\!\!\!
	\frac{f_t^{k}(\vec{q}_{0:t}^{\;k}|\widetilde{\Theta})\tilde{r}_t^kp(\vec{m}_{0:t}^k,\vec{q}_{0:t}^{\;k}|\vec{o}_{1:t}^{\;k},\Theta)}{f_t^{k}(\vec{q}_{0:t}^{\;k}|\widetilde{\Theta})}$}
		\cr
	&&\hspace{-1.25cm}\geq\mbox{$\sum\limits_{k,t,\vec{q}_{0:t}^{\;k}}\!\!\!
		 f_t^k(\vec{q}_{0:t}^{\;k}|\widetilde{\Theta})\ln\frac{\tilde{r}_t^{\;k}p(\vec{m}_{0:t}^k,\vec{q}_{0:t}^{\;k}|\vec{o}_{1:t}^{\;k},\Theta)}{f_t^k(\vec{q}_{0:t}^{\;k}|\widetilde{\Theta})}$}
		\stackrel{def.}{=}\!\mathrm{lb}(\Theta|\widetilde{\Theta}),
\end{eqnarray}
where $f_t^k(\vec{q}_{0:t}^{\;k}|\widetilde{\Theta})\!\!\stackrel{def.}{=}\!\!\tilde{r}_t^kp(\vec{m}_{0:t}^k, \vec{q}_{0:t}^{\;k}|\vec{o}_{1:t}^{\;k},\widetilde{\Theta})/\hat{V}(\mathcal{D}^{(K)};\widetilde{\Theta})$, and $\{f(\vec{q}_{0:t}^{\;k}|\widetilde{\Theta})\!\!\geq\!\!0\}$ satisfy the normalization constraint $\sum_{k=1}^K\sum_{t=0}^{T_k}\sum_{\vec{q}_{0:t}^{\;k}}f_t^k(\vec{q}_{0:t}^{\;k}|\widetilde{\Theta}) \!=\!K$ with $\widetilde{\Theta}$ the most recent estimate of $\Theta$, and  $\tilde{r}_t^k\stackrel{def.}{=}\gamma^t(r_t^k-r_{min})/\prod_{\tau=0}^tp^\Psi(\vec{m}^k_{\tau}|h^k_{\tau}), \forall t,k$ are reweighted rewards with $r_{min}$ denoting the minimum reward, leading to the following constrained optimization problem
\begin{eqnarray}\label{eq:EMproblem}
&&\hspace{-0.5cm}\mbox{$\max_{\big\{f_t^k\big(\vec{q}_{0:t}^{\,k};\widetilde{\Theta}\big)\big\}, \Theta}\mathrm{lb}(\Theta|\widetilde{\Theta})$}
\cr
&&\hspace{-0.5cm}\mbox{$\textrm{subject to:} \sum_{k=1}^K\sum_{t=0}^{T_k}\sum_{q_{n,0:t}^{\,k}=1}^{|Q_{1:|N|}|} f_t^k(\vec{q}_{0:t}^{\,k};\widetilde{\Theta}) = K$,} 
\cr
&&\hspace{-0.5cm}
\mbox{$p(\vec{m}_{0:t}^{\,k}\vec{q}_{0:t}^{\,k};\widetilde{\Theta}) = \prod_{n=1}^{|N|} p(m_{n, 0:t}^{\,k}, q_{n, 0:t}^{\,k}|o_{n, 0:t}^{\,k},\widetilde{\Theta}_n).$}  
\end{eqnarray}

Based on the problem formulation~\eqref{eq:EMproblem}, an EM algorithm can be derived to learn the macro-action FSCs. Algorithmically, the main steps involve alternating between computing the lower bound of the log empirical value function~\eqref{eq:lb-ln-obj} (E-step) and parameter estimation (M-step). This optimization algorithm is called policy based expectation maximization (PoEM), the details of which is referred to~\cite{Liu:AAAI16}.

\section{Related work}
The use of multi-robot teams has recently become viable for large-scale operations due to ever-decreasing cost and increasing accessibility of robotics platforms, allowing robots to replace humans in team-based decision-making settings including, but not limited to, search and rescue \cite{grayson2014search}. Use of multiple robots allows dissemination of heterogeneous capabilities across the team, increasing fault-tolerance and decreasing risk associated with losing or damaging a single all-encompassing vehicle \cite{wong2011multiple}.


The large body of work on multi-robot task allocation (MRTA) comes in decentralized, centralized, and distributed/hybrid flavors. Centralized architectures \cite{jin2003cooperative,turra2004fast} rely on full information sharing between all robots. However, in settings such as SAR, communication infrastructure may be unavailable, requiring the use of alternative frameworks. Distributed frameworks, such as those used in auction-based algorithms \cite{choi2009consensus}, use local communication for consensus on robot policies. This enables robustness against communication failures in hazardous, real-world settings. However, in settings such as SAR, it can be unreasonable or impossible for robots to communicate with one another during task execution. Decentralized frameworks, such as Dec-POMDPs \cite{Bernstein:MOR-2002} and the approach proposed in this paper, target this setting, allowing a spectrum of policies ranging from communication-free to explicitly communication-enabled. The flexibility offered by decentralized planners makes them suitable candidates for multi-robot operation in hazardous or uncertain domains, such as SAR.

Finally, note that unlike the majority of the existing MRTA literature, the work presented here exploits the strengths of the MacDec-POMDP framework \cite{AmatoICRA15} to develop a unifying framework which considers sources of uncertainty, task-level learning and planning, temporal constraints, and non-deterministic action durations.


\section{Iterative Sampling Based Expectation Maximization Algorithm}

The PoEM algorithm~\cite{Liu:AAAI16} is the first attempt to address policy learning for MacDec-POMDPs with batch data. However, one of the biggest challenge for PoEM is that it only grantees convergence to a local solution, a problem often encountered when optimizing mixture models, such as the empirical value function~\eqref{def:obj-value}~\footnote{Note that the empirical value function~\eqref{def:obj-value} can be interpreted as a likelihood function for FSCs with the number of mixture components equal to the total number of subepisodes $\sum_{k=1}^KT_k$~\cite{Kumar:UAI10}.}. Moreover, PoEM is a deterministic algorithm for approximate optimization, meaning that it converges to the same stationary point if initialized repeatedly from the same starting value. Hence, PoEM  
can be prone to poor local solution for more complicated real-world problems (as it will be shown in a later numerical experiment). To address these issues, we propose a concurrently (multi-threaded) randomized method called iterative sampling based Expectation Maximization (iSEM). The iSEM algorithm is designed to run multiple instances of PoEM with randomly initialized FSC parameters in parallel to minimize the probability of converging to a sub-optimal solution due to poor initialization. Furthermore, to exploit information and computational efforts on runs of PoEM which are clearly converging to poor values, iSEM allows re-sampling of parameters once convergence of $V(D_{test})$ is detected, increasing the chance of overcoming poor local optima. Because of the re-sampling step, which involves random reinitialization for threads converging to poor local value, iSEM can be deemed as a randomized version of the PoEM algorithm. This is essential for convergence to well-performing policies, since it widely known that global optimization paradigms are often based on the principal of stochasticity~\cite{horst2013handbook}.

\begin{algorithm}[t]
	\captionsetup{font=normalsize}
	\caption{\textsc{iSEM}}\label{alg:iSEM}
	\begin{algorithmic}[1]
		\Require Episodes $\mathcal{D}^{(K)}_{train}$,$\mathcal{D}^{(K)}_{eval}$, number of MC samples $M$, maximum iteration number $T_{max}$, threshold $\epsilon$, $J=\emptyset$, $\rm{Iter}=0$  
		\While {$I\neq\emptyset$ \textbf{or} $\rm{Iter}\leq T_{max}$}
                \State $I =\{1,\cdots,M\}\setminus J$, $\rm{Iter}=\rm{Iter}+1$ \label{lst:step1}
        \For {$i \in I$}
        \State Sample $\{\Theta_i\} \sim Dirichlet(1)$ \label{lst:step2_start}
		\State $\Theta_i^\infty$ = PoEM($\Theta_i$, $D_{train}$) 
        \State Compute $V(D_{eval}, \Theta_i^\infty)$ using (\ref{eq:emprical_value}) \label{lst:step2_end}
		\EndFor
        \State Compute $\Theta^* =\argmax_{i \in \{1,\cdots,M\}}V(D_{eval},\Theta_i^\infty)$
 \label{lst:step3_max}
 \State $J=\emptyset$\label{step4_start}
         \For {$i = 1 $ to $ M$}
         \If {$V(D_{eval},\Theta^*)-V(D_{eval},\Theta_i^\infty) < \epsilon$}
        \State $J = J\cup\{i\}$
        \EndIf\label{step4_end}
        \EndFor
        \EndWhile
		\State \Return Controller parameters $\Theta^*$.
	\end{algorithmic}
    \vskip -0.05in
\end{algorithm}
iSEM is outlined in Algorithm~\ref{alg:iSEM}. Domain experience data is first partitioned into training and evaluation sets, $\mathcal{D}^{(K)}_{train}$ and $\mathcal{D}^{(K)}_{eval}$. iSEM takes the partitioned data, the number of Monte Carlo samples (threads) $M$ and parameters controlling convergence as input, and maintains two sets, $I$ and $J$: $I$ records the indices of threads whose evaluation values are $\epsilon$ lower than the best value, and $J$ records the remaining thread indices (and is initialized as empty). iSEM iteratively applies four steps: 1) update $I$ (line~\ref{lst:step1}); 2) for the threads in $I$, randomly initialize FSC parameters by drawing samples from Dirichlet distributions with concentration parameter $1$, run the PoEM algorithm~\cite{Liu:AAAI16} and evaluate the resulting policy $\{\Theta_i^\infty\}_{i\in I}$\footnote{$\infty$ sign indicates run the PoEM algorithm until convergence.} (line~\ref{lst:step2_start}-\ref{lst:step2_end}); 3) update the best policy and its evaluation value obtained in current iteration (line 8); 4) update $J$ by recording the indices of threads whose converged policy values are $\epsilon$ close to the best policy (line~\ref{step4_start}-\ref{step4_end}). Critically, the final step (update of $J$) enables distinguishing threads that clearly converge to poor local solutions and "good" local solutions. In the subsequent iteration, threads with poor local solutions are reinitialized and re-executed until the policy values from all the threads are $\epsilon$ close to the best solution learned so far. 
The  iSEM  algorithm  is  guaranteed  to  monotonically  increase  the  lower bound of empirical  value  function  over  successive  iterations and the convergence property is summarized by the following theorem. 
\begin{theorem}\label{th:iSEM}
Algorithm~\ref{alg:iSEM} monotonically increases $\hat{V}\big(\mathcal{D}^{(K)};\Theta\big)$, until convergence to a maximum.
\end{theorem}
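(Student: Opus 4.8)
The plan is to collapse the claim onto a single scalar sequence and then invoke the Monotone Convergence Theorem. Let $V^{*}(t) := V\!\big(\mathcal{D}^{(K)}_{eval},\Theta^{*}(t)\big)$ denote the best evaluation value computed at line~\ref{lst:step3_max} during outer iteration $t$, where $V(\cdot,\cdot)$ is the empirical value function of Definition~\ref{def:obj-value} evaluated via Eq.~\eqref{eq:emprical_value} on the held-out set. It suffices to show two things: that $\{V^{*}(t)\}_{t\ge 0}$ is non-decreasing, and that it is bounded above. Monotonicity plus boundedness then deliver a finite limit, which is the ``maximum'' referred to in the statement.

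For monotonicity, the key invariant is that the incumbent best policy is never re-sampled. At the end of iteration $t$, the thread index $i^{*}$ attaining the maximum in line~\ref{lst:step3_max} satisfies $V(\mathcal{D}^{(K)}_{eval},\Theta^{*}(t)) - V(\mathcal{D}^{(K)}_{eval},\Theta_{i^{*}}^{\infty}) = 0 < \epsilon$, so lines~\ref{step4_start}--\ref{step4_end} place $i^{*}$ into $J$. Hence $i^{*}\notin I_{t+1}=\{1,\dots,M\}\setminus J$, so at iteration $t+1$ thread $i^{*}$ is neither re-initialized (lines~\ref{lst:step2_start}--\ref{lst:step2_end}) nor its cached value overwritten. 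Because line~\ref{lst:step3_max} then takes the argmax over all $M$ threads---which still includes the preserved incumbent---the new best can only equal or exceed the old one: $V^{*}(t+1)\ge V^{*}(t)$. Thus re-sampling the poorly-performing threads in $I_{t+1}$ can raise $V^{*}$ but never lower it, which is exactly the asserted monotone increase of $\hat{V}$.

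For boundedness, I would note that $V(\mathcal{D}^{(K)}_{eval},\Theta)$ is uniformly bounded over all admissible $\Theta$: in Eq.~\eqref{eq:emprical_value} each factor $p(m^{k}_{n,0:t}\mid h^{k}_{n,t},\Theta_{n})\le 1$, while the remaining ingredients (the discounted rewards and the behavior-policy importance weights) depend only on the fixed finite dataset and not on $\Theta$, so the sum is dominated by a finite constant; following the reward structure this bound can be taken as $R_{max}/(1-\gamma)$. With $\{V^{*}(t)\}$ monotone and bounded above, the Monotone Convergence Theorem yields convergence to a finite limit, completing the argument.

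The main obstacle is not the convergence step---which is the standard monotone-and-bounded template---but pinning down the invariant cleanly, in two respects. First, one must make explicit that line~\ref{lst:step3_max} compares freshly computed values for $i\in I$ against \emph{cached} values for $i\in J$ on the same scale, so that the surviving incumbent is genuinely in contention and cannot be displaced by a worse re-sample; this is the crux of the non-decreasing property. Second, the boundedness claim deserves care because the off-policy estimator in Eq.~\eqref{eq:emprical_value} carries importance weights $1/p^{\Psi_{n}}$ that are only guaranteed positive (by the $p^{\Psi_{n}}(m\mid h)>0$ assumption of Definition~\ref{def:obj-value}); one leans on the finiteness of $\mathcal{D}^{(K)}_{eval}$ rather than on a uniform-in-trajectory bound. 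Finally I would flag that ``convergence to a maximum'' here means convergence of the best-so-far value, not global optimality---the latter relies on the stochastic re-initialization heuristic discussed before the theorem and is outside this claim.
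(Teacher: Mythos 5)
Your proof follows essentially the same route as the paper's: both establish that the best-so-far evaluation value $V(D_{eval},\Theta^*(t))$ is non-decreasing across outer iterations and bounded above, then invoke the Monotone Convergence Theorem to obtain a finite limit. If anything, your write-up is more careful than the paper's: you make explicit the incumbent-preservation invariant (the argmax thread satisfies the $\epsilon$-test, lands in $J$, and hence is never re-sampled, so its cached value remains in contention at the next argmax) that the paper's appeal to ``step 8'' leaves implicit, and you correctly observe that boundedness should rest on the finiteness of the evaluation set, since the importance weights $1/p^{\Psi_n}$ mean the stated $R_{max}/(1-\gamma)$ bound need not hold verbatim for the off-policy estimator.
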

\begin{proof}
Assume that $\Theta^*(t)$ is a policy with the highest evaluation value among the policies learned by all the threads at iteration $t$, and the set $J_t$ records the thread indices with corresponding policy value $\epsilon$ close to $V(D_{eval},\Theta^*(t))$. In the iteration $t+1$, the set $I_{t+1}$ contains the thread indices with corresponding policy values satisfy $V(D_{eval},\Theta^*(t))-V(D_{eval},\Theta_i(t))>\epsilon, \forall i\in I_{t+1} =\{1,\cdots,M\}\setminus J_t$. Starting from $t=0$, we have $V(D_{eval},\Theta^*(0))\geq V(D_{eval},\Theta_i^\infty(0)), \forall i\in I_0=\{1,\cdots,M\}$. In the next iteration (i.e., $t=1$), we have $|I_1|=|\{1,\cdots,M\}\setminus J_0|\leq|I_0|$. The steps~\ref{lst:step2_start}-\ref{lst:step2_end} allow the threads in $I_1$ to rerun with randomly reinitialized parameters. According to step~\ref{lst:step3_max} (Algorithm~\ref{alg:iSEM}), we can obtain $V(D_{eval},\Theta^*(1))\geq V(D_{eval},\Theta^*(0))$. Following the same analysis for $t>1$, we can obtain $V(D_{eval},\Theta^*(t))\geq V(D_{eval}, \Theta^*(t-1))$. Since $\{V(D_{eval},\Theta^*(t))\}_{t=0}^\infty$ is a monotone sequence and it is upper bounded by $\frac{R_{max}}{1-\gamma}$, according to Monotone convergence theorem, {$V(D_{eval},\Theta^*(t))$} has a finite limit, which completes the proof. 
 
\end{proof}
Note that the convergence of iSEM is different from that of PoEM in the sense that iSEM updates a global parameter estimate based on feedbacks from several local optima (obtained from random initialization). It is also worth mentioning that with finite number of threads, iSEM might still converge to a local maximum. However, we can show that on average, iSEM has higher probability of convergence to better solutions than PoEM. Moreover, the iSEM algorithm can be considered a special case of evolutionary programming (EP)~\cite{simon2013evolutionary}, which maintains a population of solutions (i.e., the set of policy parameters in $J$). Yet, there are obvious differences between iSEM and PE. Notably, instead of mutating from existing solutions, iSEM resamples completely new initializations for parameters and optimizes them using PoEM. In additional, iSEM is highly parallelizable due to its use of concurrent threads.
\section{Experiments}
This section presents simulation and hardware experiments for evaluating the proposed policy learning algorithm. First a simulator for a large problem motivated by SAR is introduced. Then, the performance of iSEM is compared to previous work based on the simulated SAR problem. Finally, a multi-robot hardware implementation is presented to demonstrate a working real-world system.

\subsection{Search and Rescue Problem}\label{app:sar}
The SAR problem involves a heterogeneous set of robots searching for victims and rescuing survivors after a disaster (e.g., bringing them to a location where medical attention can be provided). Each robot has to make decisions using information gathered from observations and limited communications with teammates. Robots must decide how to explore the environment and how to prioritize rescue operations for the various victims discovered.

The scenario begins after a natural disaster strikes the simulated world. The search and rescue domain considered is a 20 $\times$ 10 unit grid with $s=6$ designated sites: 1 muster site and 5 victim sites. All robots are initialized at the muster site. Victim sites are randomly populated with victims (6 victims total). Each victim has a randomly-initialized health state. While the locations of the sites are known, the number of victims and their health at each site is unknown to the robots. The maximum victim capacity of each site also varies based on the site size. Each victim's health degrades with time. 

An unmanned aerial vehicle (UAV) surveys the disaster from above. A set of 3 unmanned ground vehicles (UGVs) can search the space or retrieve victims and deliver them to the muster site, where medical attention is provided. The objective of the team is to maximize the number of victims returned to the muster site while they are still alive. This is a challenging domain due to its sequential decision-making nature, large size (4 agents), and both transition and observation process uncertainty, including stochasticity in communication. Moreover, as communication only happens within a limited radius, synchronization and sharing of global information are prohibited, making this a highly-realistic and challenging domain.

\subsection{Simulator Description}
\label{sec:sim}
All simulation is conducted within the Robot Operating System (ROS) \cite{ROS2009}. The simulator executes a time-stepped model of the scenario, where scenario parameters define the map of the world, number of each type of robot, and locations and initial states of victims.

Each robot's macro-controller policy is executed by a lower-level controller which checks the initiation and termination conditions for the macro-action and generates sequences of primitive actions.

\subsubsection{Primitive Actions}
The simulator models primitive actions, each of which take one time-step to execute. The primitive actions for the robots include: (a) move vehicle, (b) pick-up victim (UGVs only), (c) drop-off victim (UGVs only) and (d) do nothing.
Observations and communication occur automatically whenever possible and do not take any additional time to execute.

Macro-action policies, built from these primitive actions, may take any arbitrary amount of time in multiples of the time-steps of the simulator. Macro-action durations are also non-deterministic, as they are a function of the scenario parameters, world state, and inter-robot interactions (e.g., collision avoidance).

\subsubsection{The World}
While the underlying robotics simulators utilized are three-dimensional, the world representation is in two dimensions. This allows increased computational efficiency while not detracting from policy fidelity, as the sites for ground vehicles are ultimately located on a 2D plane. The world is modeled as a 2D plane divided into an evenly-spaced grid within a rectangular boundary of arbitrary size. Each rescue site is a discrete rectangle of grid spaces of arbitrary size within the world.


Some number of victims are initially located in each rescue site. Victim health is represented as a value from 0 to 1, with 1 being perfectly healthy and 0 being deceased. Each victim may start at any level of health, and its health degrades linearly with time.  If a victim is brought to the muster location, its health goes to 1 and no longer degrades. One victim at a time may be transported by a UGV to the muster, although this can be generalized to larger settings by allowing the vehicle to carry multiple victims simultaneously.

\subsubsection{Movement}
Simulated dynamical models are used to represent the motion of the air and ground vehicles within ROS. The vehicles can move within the rectangular boundaries of the world defined in the scenario.

UGV motion is modeled using a Dubins car model. Real-time multi-robot collision avoidance is conducted using the reciprocal velocity obstacles (RVO) formulation \cite{van2008reciprocal}. State estimates are obtained using a motion capture system, and processed within RVO to compute safe velocity trajectories for the vehicles. 


UAV dynamics are modeled using a linearization of a quadrotor around hover state, as detailed in \cite{mellinger2012trajectory}. Since the UAV operates at a higher altitude than UGVs and obstacles, there are no restrictions to the air vehicle's movement.
These dynamics correspond to the transition model $T$ specified in the (Mac)Dec-POMDP frameworks discussed in the section~\ref{sec:model-framework}.

\subsubsection{Communication}
Communication is range-limited.  When robots are within range (which is larger for UAV-UGV communication than for UGV-UGV communication), they will automatically share their observations with two-way communication.  Communication is imperfect, and has a .05 probability of failing to occur even when robots are in range.
For the scenarios used to generate the results in this study, a UGV can communicate its observation with any other UGV within 3 grid spaces in any direction; the UAV can communicate with any UGV within 6 grid spaces in any direction.

\subsection{MacDec-POMDP Representation}
We now describe the MacDec-POMDP represention that is used for learning. Note that the reprentation in Section \ref{sec:sim} is not observable to the robots and is only used for constructing the simulator. 
\subsubsection{Rewards}
The joint reward is $+1$ for each victim brought back to muster alive and $-1$ for each victim who dies.

\subsubsection{Observations}
In the SAR domain, a UAV can observe victim locations when over a rescue site. However, victim health status is not observable by air. A UGV that is in a rescue site can observe all victims (location and health status) within that site. Robots are always able to observe their own location and whether they are holding a victim at a given moment. 

The observation vector $O$ on which the macro-controller makes decisions is a subset of the raw observations each robot may have accumulated through the execution of the prior macro-action.  The robots report the state of their current location and one other location (which could be directly observed or received via communication while completing the macro action). The second location reported is the most urgent state with the most recent new observation. If there are no new observations other than the robot's own location, the second location observation is equivalent to the self location.


The observation vector is as follows,
\begin{equation}
\begin{aligned}
O = [\text{self state}, \text{self location}, \text{location state},  \\
\text{second location}, \text{second location state}]
\end{aligned}
\end{equation}
where, self state $\in$ \{$1/0$= is/not holding victim\}, self location $\in$ \{site 1, site 2, ..., site s\}, location state $\in$ \{$0=$ no victims needing help, $1=$ victims needing help (not critical), $2=$ victims needing help (critical)\}, second location $\in$ \{site 1, site 2, ..., site s\}, and second location state $\in$ \{$0=$ no victims needing help, $1=$ victims needing help (not critical), $2=$ victims needing help (critical)\}.
There are $18s^2$ possible observation vectors, making the observation space substantially larger than previous macro-action based domains \cite{ShayeganICRA15,AmatoICRA15}.

\subsubsection{Macro-Actions}
The macro-actions utilized in this problem are as follows:
\begin{itemize}
\item \textbf{Go to Muster} (available to both UAV and UGV): Robot attempts to go to the muster point from anywhere else, but only if it is holding a live victim.  If a victim is on-board, victim will always disembark at the muster.
\item \textbf{Pick up Victim} (available only to UGV): Robot (UGV only) attempts to go to a victim's location from a starting point within the site.  Terminates when the robot reaches the victim; also may terminate if there is no longer a victim needing help at the site (i.e., another robot picked the victim up first or the victim died).  If victim and robot are located in the same grid cell, the victim can be ``picked up''.
\item \textbf{Go to Site $i \in \{1,\ldots,s\}$} (available to both UAV and UGV): Robot goes to a specified disaster site $i$. Terminates when the robot is in the site.  Robot can receive observations of the victims at the site.
\end{itemize}




\begin{figure*}[ht]
\centering
\hspace{-0.0cm}
\begin{subfigure}[t]{0.23\textwidth}
		\centering
			\includegraphics[scale=0.11]{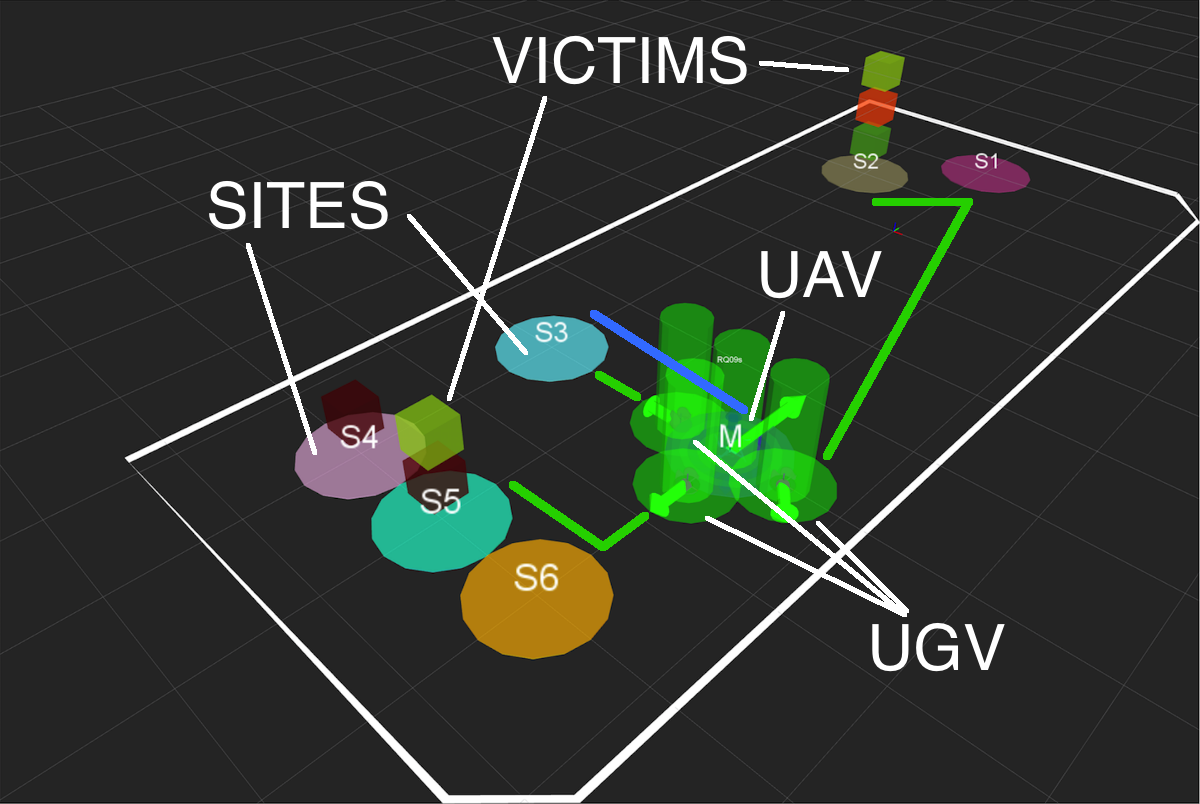}
		\caption{}\label{fig:victims_site}
	\end{subfigure}
    \hspace{-0.02cm}
	\begin{subfigure}[t]{0.23\textwidth}
		\centering
			\includegraphics[scale=0.18]{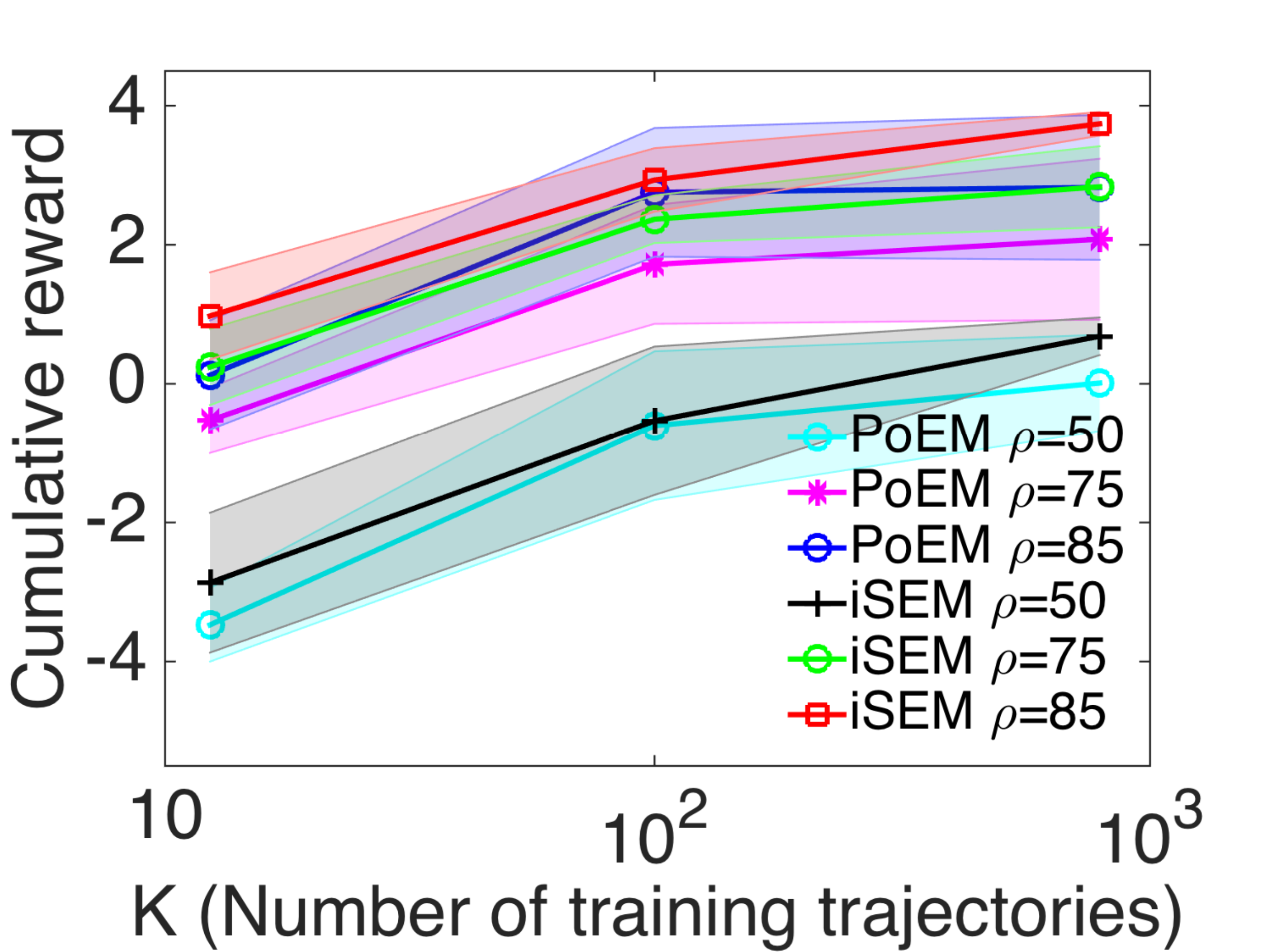}
		\caption{}\label{fig:iSEMvsPoEM}
	\end{subfigure}
    \hspace{-0.02cm}
	\begin{subfigure}[t]{0.23\textwidth}
		\centering
		\includegraphics[scale=0.18]{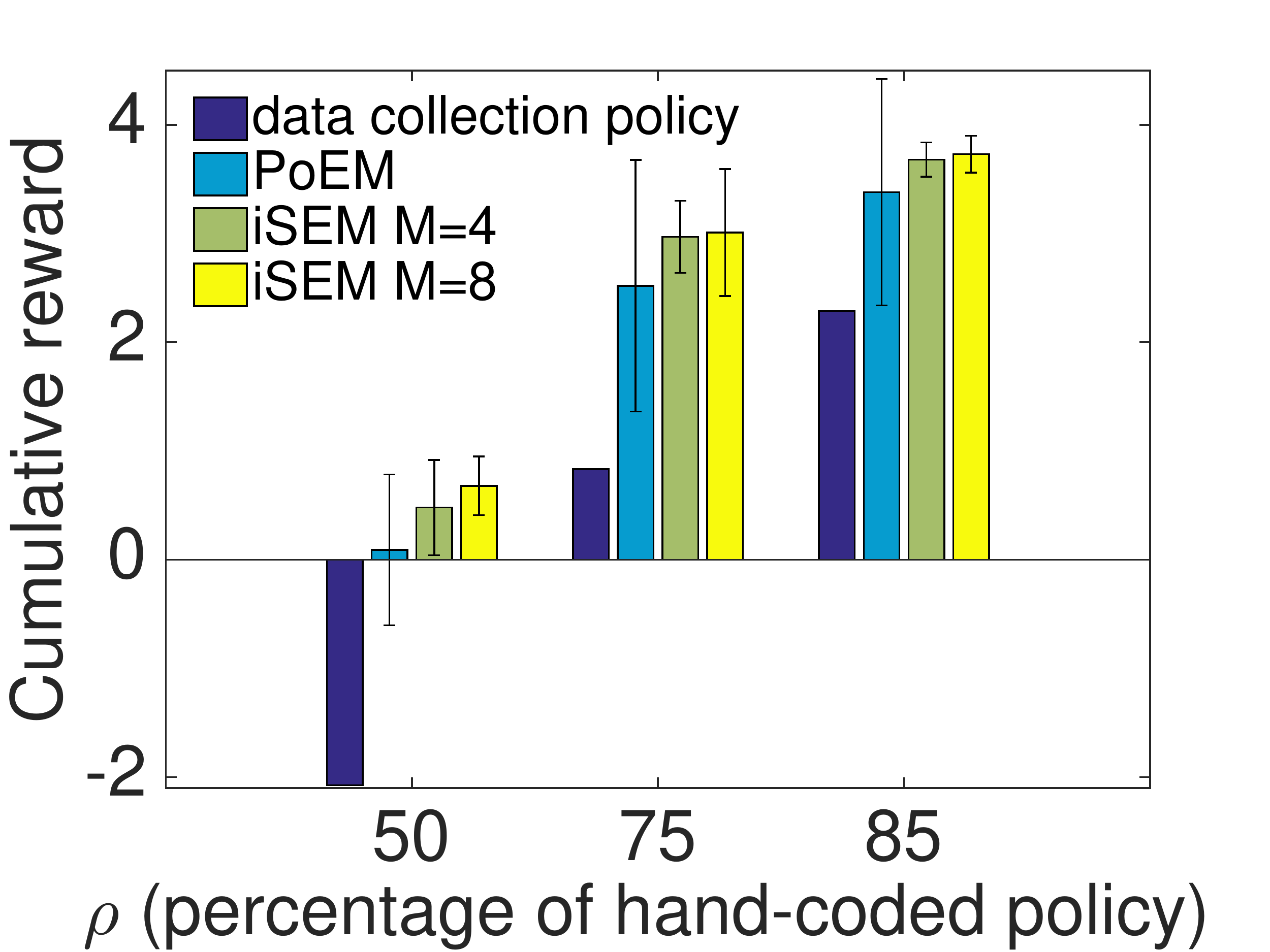}
		\caption{}\label{fig:M}
	\end{subfigure}
    \hspace{-0.1cm}
	\begin{subfigure}[t]{0.23\textwidth}
		\centering
		\includegraphics[scale=0.18]{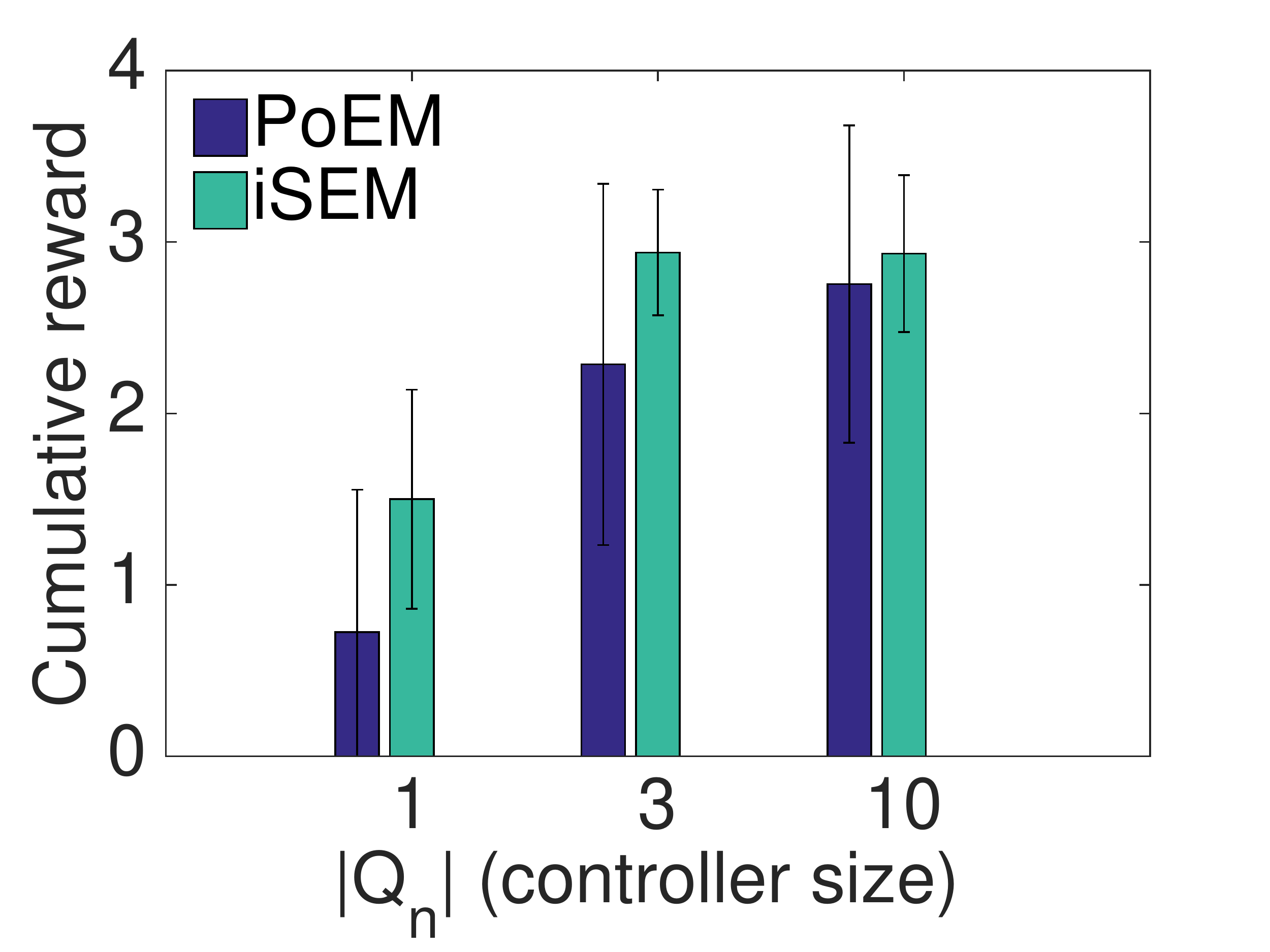}
		\caption{}\label{fig:Q}
	\end{subfigure}
    \vskip -0.1in
\caption{
(a) RVIZ simulation of the experiment; Testing performance using (b) different number of training samples (with $Q_n$=10), (c) threads (with $Q_n$=10, $K$=500), and (d) controller sizes (with $M=8, K=100, \rho=85$).}
	\label{fig:results}
\end{figure*}
\subsection{Simulations and Numerical Results}
The SAR domain extends previous benchmarks for MacDec-POMDPs both in terms of the number of robots and the number of states/actions/observations. Notably, due to the very large observation space cardinality of the SAR domain, it is difficult to generate an optimal solution with existing solvers such as \cite{ShayeganICRA15,AmatoICRA15} in a reasonable amount of time. Hence, due to the lack of a known global optima, the RL algorithms (iSEM and PoEM) are compared over the same datasets. The dataset is collected through the simulator by using a behavior policy combining a hand-coded expert policy (the same used in~\cite{Liu:AAAI16}) and a random policy, with $\rho$ denoting the percentage of expert policy.

To compare iSEM and PoEM on the SAR domain, experiments are conducted with $\rho=[50, 75, 85]$ and $|Q_n|=[1,3,10]$ (varying controller sizes)\footnote{$|Q_n|=1$ corresponds to reactive policies (based only on current observations).}. Corresponding test (holdout) set results are plotted in Figure~\ref{fig:results}. Several conclusions can be drawn from the results. First, as the amount of training data ($K$) increases, the cumulative reward increases for both PoEM and iSEM (under the same $\eta$, as shown in Fig.\ref{fig:iSEMvsPoEM}). Second, with the same $K$, iSEM achieves better performance than PoEM, which validates that iSEM is better at overcoming the local optimality limitation suffered by PoEM. In addition, as the number of threads $M$ increases, iSEM converges to higher average values and smaller variance (as indicated by the error-bar, compared to PoEM), according to Figure~\ref{fig:M}, which empirically justifies the discussion under Theorem~\ref{th:iSEM}. 
Moreover, as shown in Fig.\ref{fig:Q}, under three settings of $|Q_n|$, the FSCs learned by iSEM render higher value than the PoEM policy. As $|Q_n|$ increases, the difference between PoEM and iSEM (with fixed $M$) tends to decrease, which indicates we should increase $M$ as iSEM is exploring higher dimensional parameter spaces. Finally, even in cases where the mean of iSEM is only slightly higher than PoEM, the variance of iSEM is is consistently lower than PoEM -- a critical performance difference given the uncertainty involved in the underlying domain tested.

RVIZ \cite{rviz-ros} was used in conjunction with ROS to visualize the simulations. Fig.\ref{fig:victims_site} 
shows the start of one trial with the different colored circles being sites, the stacked cubes positioned at sites as victims with colors indicating their health values, and the 4 green cylinders indicating the 3 UGVs and the UAV. The sites are as follows (from furthest to closest): site 1 (red circle), site 2 (green), site 3 (sky blue), site 4 (pink), site 5 (turquoise), site 6 (orange). Note that the normal gridworld model used in the POMDP formulation usually assumes discrete state and discrete primitive actions, whereas the simulation models are based on macro-actions which comprised low-level controllers that can deal with both discrete and continuous primitive action and states.

\subsection{Hardware Implementation} 

\begin{figure}[t]
	\begin{subfigure}[t]{0.21\textwidth}
		\centering
		\includegraphics[height=0.5\textwidth]{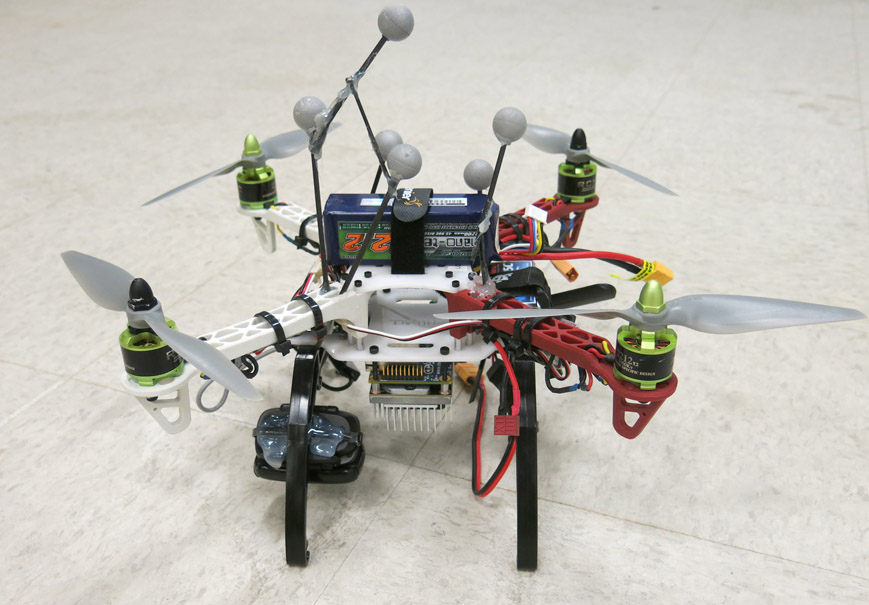}
		\caption{UAV is a DJI F330 quadrotor with onboard Jetson TX1 flight controller.}
		\label{fig:quad_overview}
	\end{subfigure}
	\hfill
	\begin{subfigure}[t]{0.21\textwidth}
		\centering
		\includegraphics[height=0.5\textwidth]{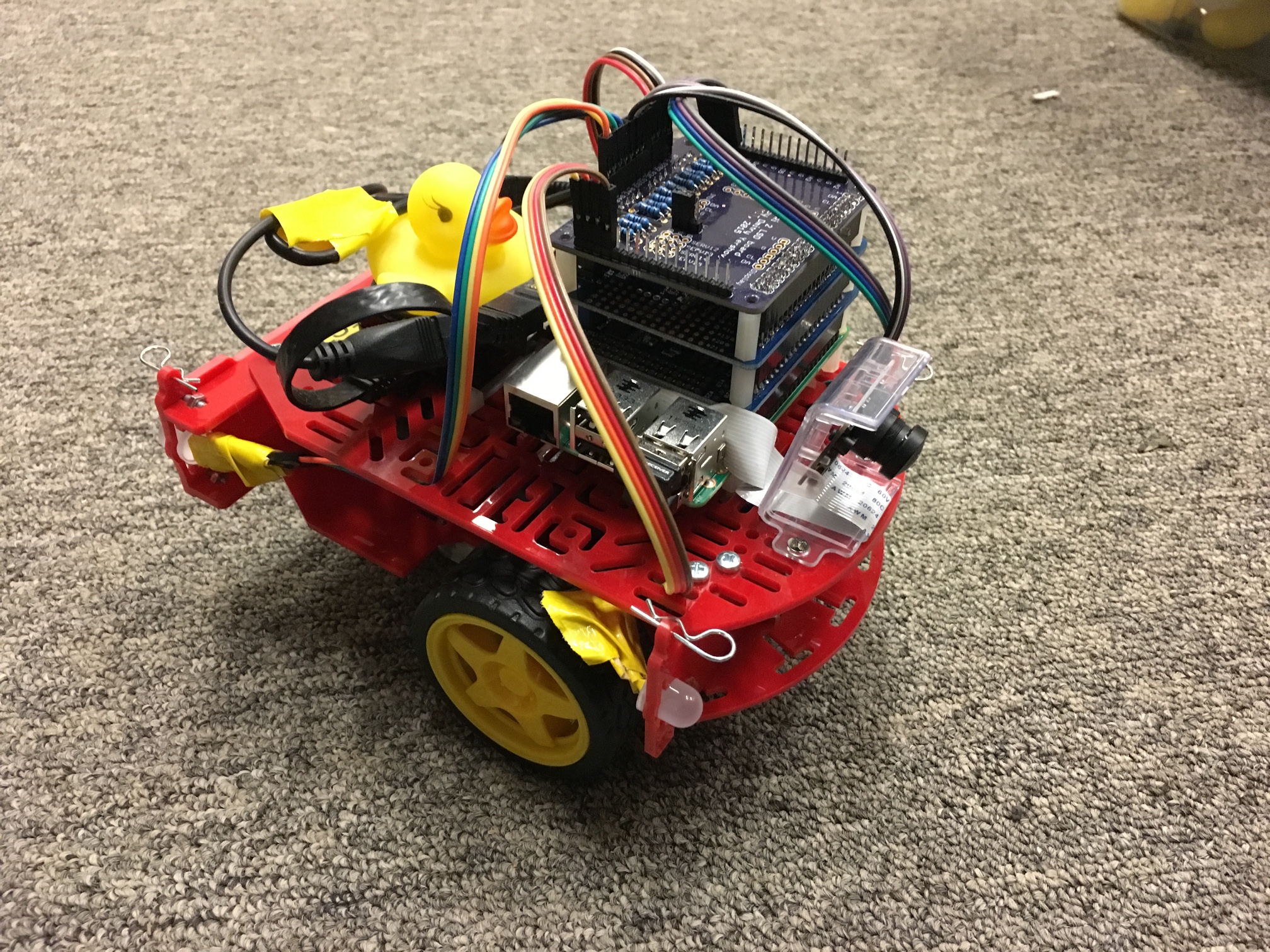}
		\caption{UGVs are custom-build ground robots with onboard Raspberry Pi 2.}
		\label{fig:ugv_overview}
	\end{subfigure}
    \caption{Robots in used Hardware Implementation.}
    \vskip -0.25in
\end{figure}

\begin{figure*}[t]
	\begin{subfigure}[t]{0.22\textwidth}
		\centering
		\includegraphics[width=0.9\textwidth]{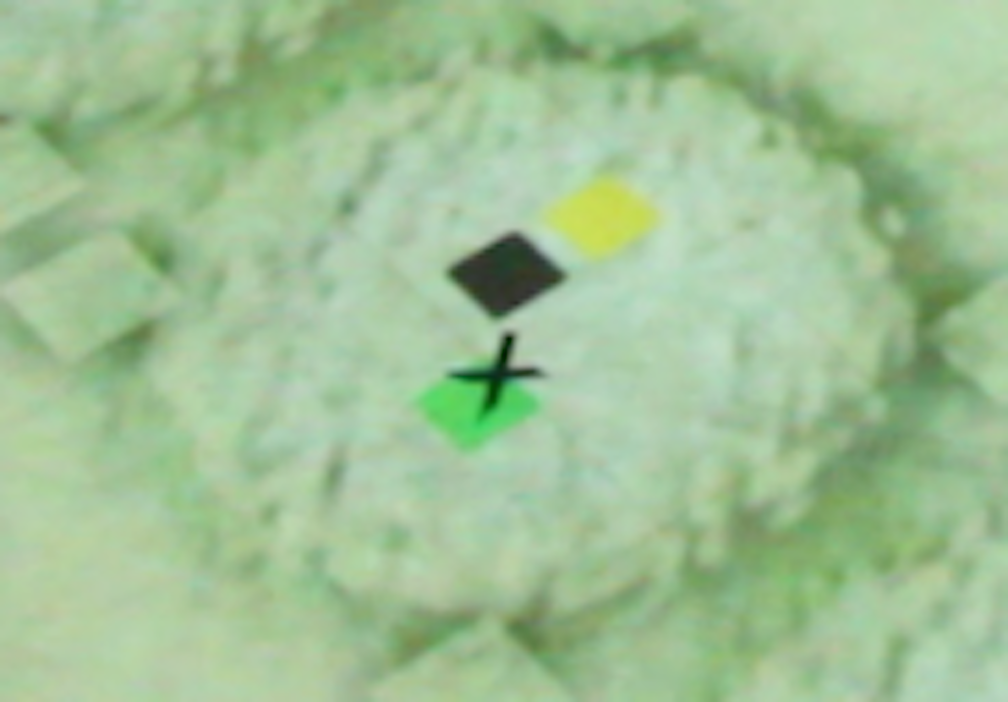}
		\caption{Zoomed in view of 3 victims (indicated as squares) at a particular site. }
		\label{fig:victims_at_site}
	\end{subfigure}
	\hfill
	\begin{subfigure}[t]{0.22\textwidth}
		\centering
		\includegraphics[width=0.9\textwidth]{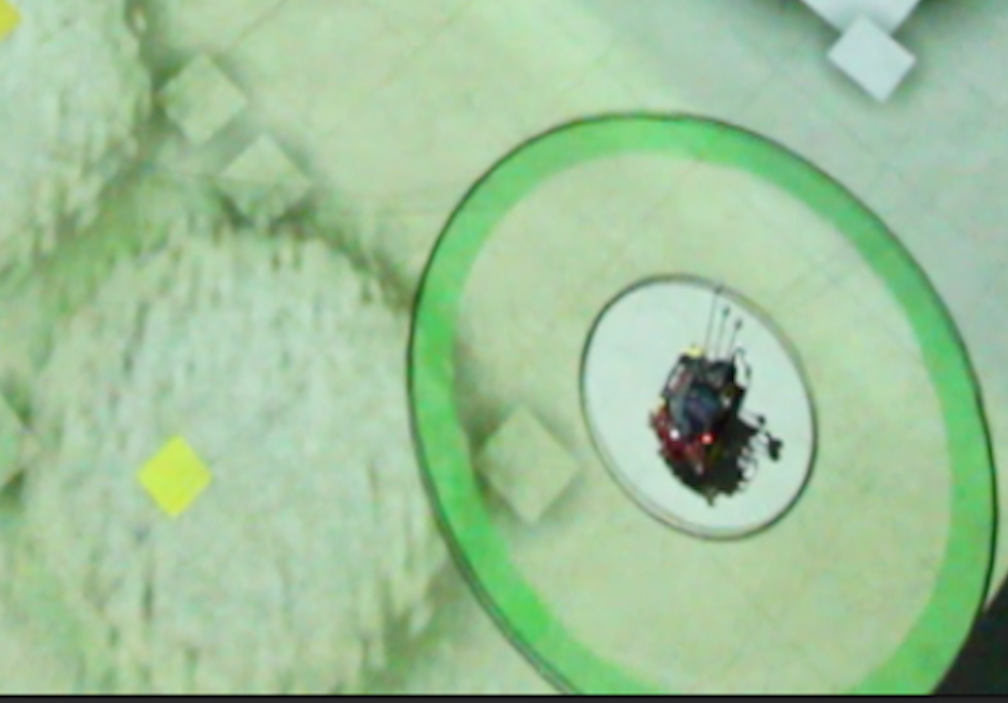}
		\caption{UGV observes the victim at the site with high health.}
		\label{fig:ugv_observation}
	\end{subfigure}
    \hfill
	\begin{subfigure}[t]{0.22\textwidth}
		\centering
		\includegraphics[width=0.9\textwidth]{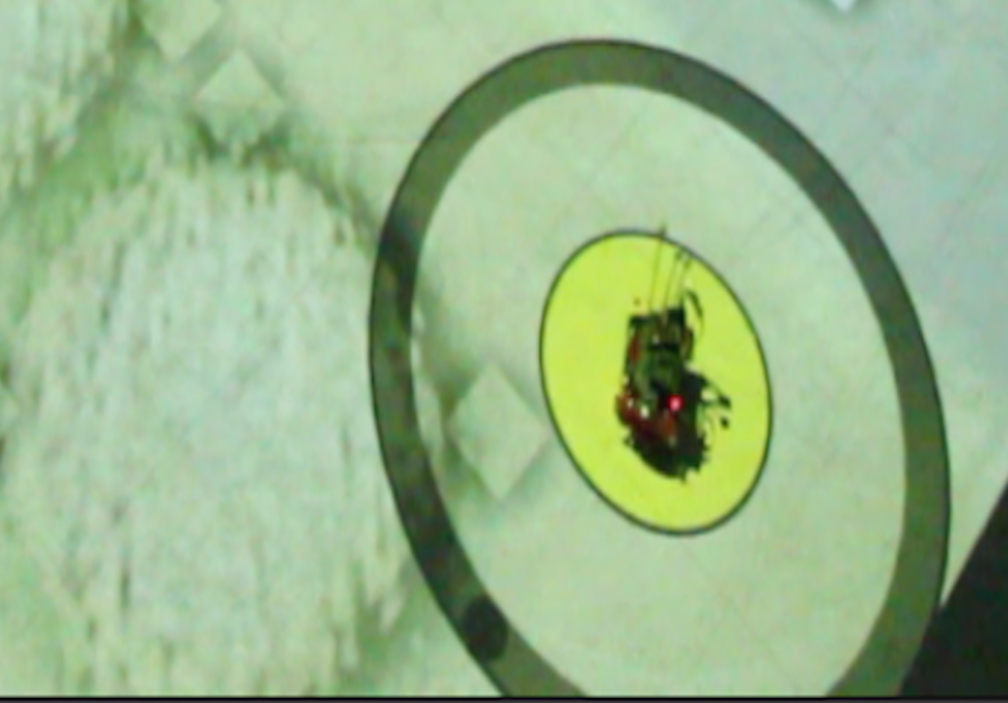}
		\caption{UGV outer ring color is black, as there are no other victims at this site.}
		\label{fig:ugv_pickingup}
	\end{subfigure}
    \hfill
	\begin{subfigure}[t]{0.22\textwidth}
		\centering
		\includegraphics[width=0.9\textwidth]{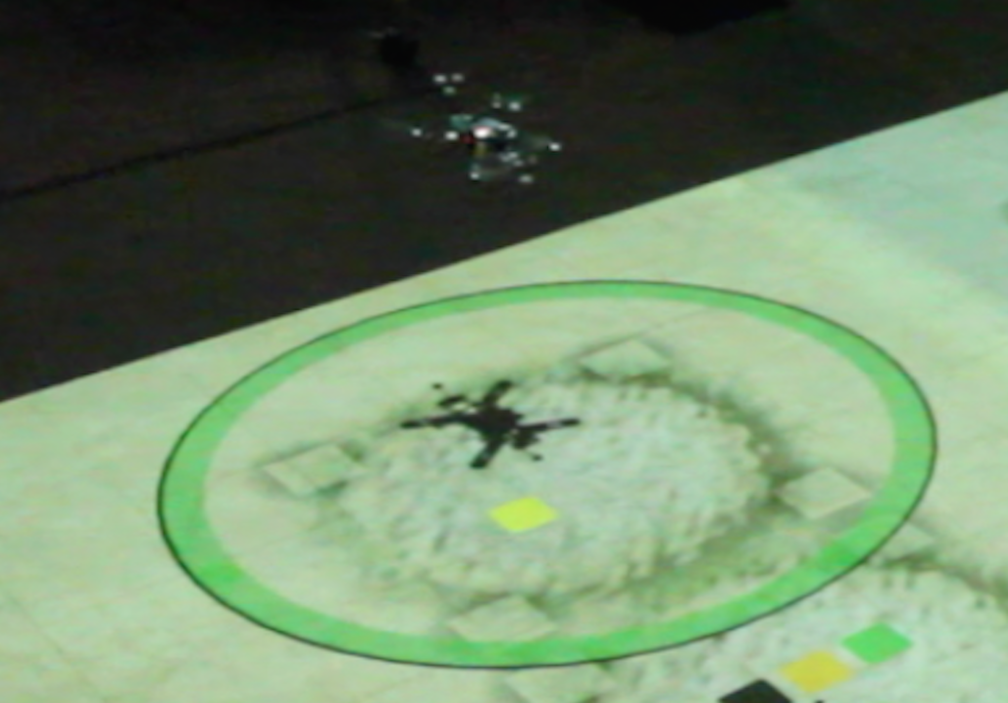}
		\caption{UAV can only observe, but not carry, victims. Thus, it only has an outer ring indicating observations.}
		\label{fig:uav_observation}
	\end{subfigure}  
    \vskip -0.1in
	\caption{Overview of hardware domain with 1 UAV and 3 UGVs. Projection system used to visualize sites and victim locations/health state. Victims shown as squares with colors representing health (green: high health, yellow: low health, red: critical health, black: deceased). For all robots, outer ring color indicates its noisy observation of the health of one of the victims present. For UGVs, inner circle color indicates health of the victim it is carrying.}
	\label{fig:overview_fig}
\end{figure*}

\begin{figure*}[t]
	\begin{subfigure}[t]{0.22\textwidth}
		\centering
		\includegraphics[width=0.9\textwidth]{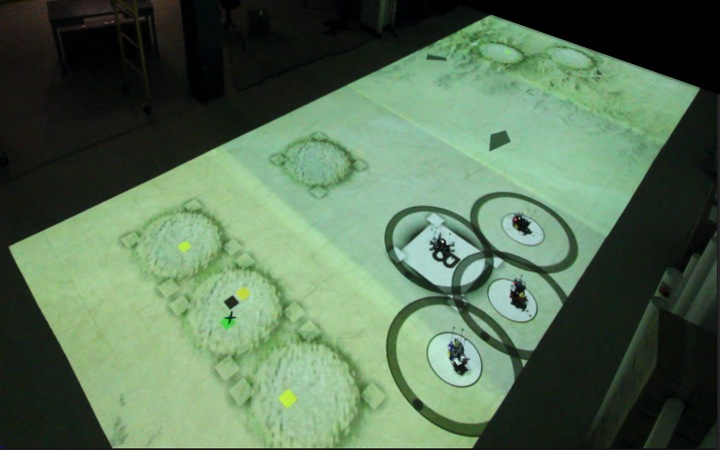}
		\caption{Start of experiment}
		\label{fig:seq1}
	\end{subfigure}
	\hfill
	\begin{subfigure}[t]{0.22\textwidth}
		\centering
		\includegraphics[width=0.9\textwidth]{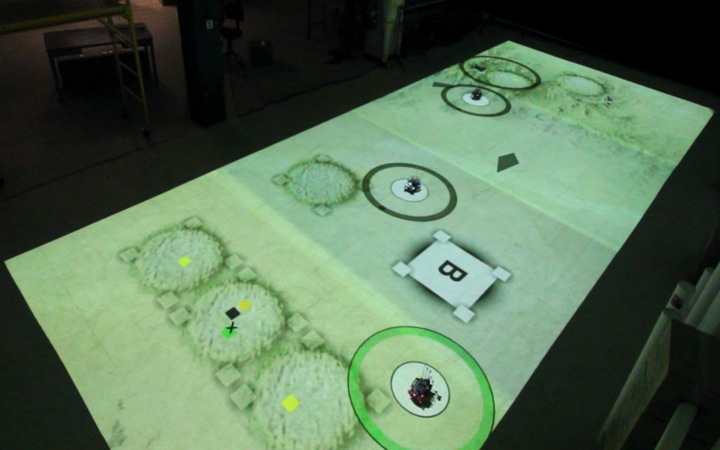}
		\caption{UGV observes victim at site 6}
		\label{fig:seq2}
	\end{subfigure}
    \hfill
	\begin{subfigure}[t]{0.22\textwidth}
		\centering
		\includegraphics[width=0.9\textwidth]{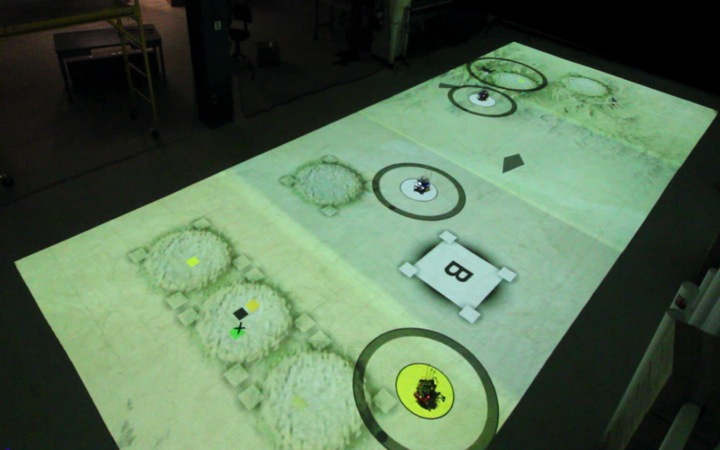}
		\caption{UGV picks up a victim, observes no others at site 6}
		\label{fig:seq3}
	\end{subfigure}
    \hfill
	\begin{subfigure}[t]{0.22\textwidth}
		\centering
		\includegraphics[width=0.9\textwidth]{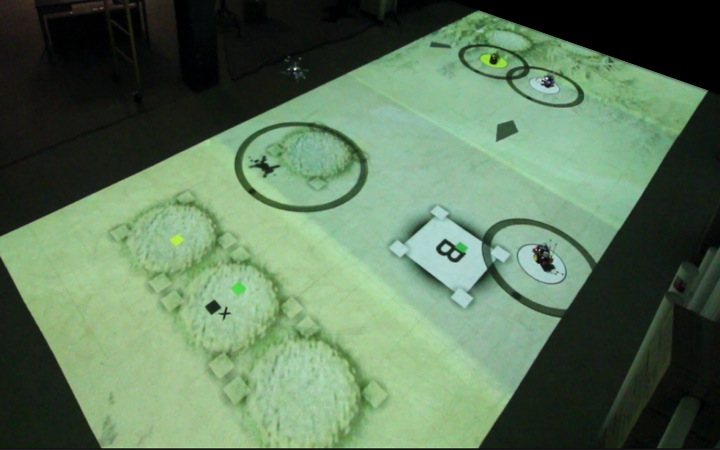}
		\caption{UGV drops off victim at muster}
		\label{fig:seq4}
	\end{subfigure}
	\\
	\begin{subfigure}[t]{0.22\textwidth}
		\centering
		\includegraphics[width=0.9\textwidth]{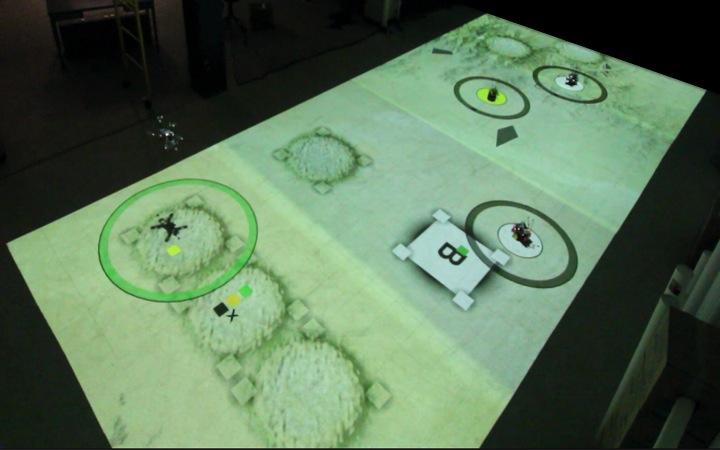}
		\caption{UAV observes victim at site}
		\label{fig:seq5}
	\end{subfigure}
	\hfill
	\begin{subfigure}[t]{0.22\textwidth}
		\centering
		\includegraphics[width=0.9\textwidth]{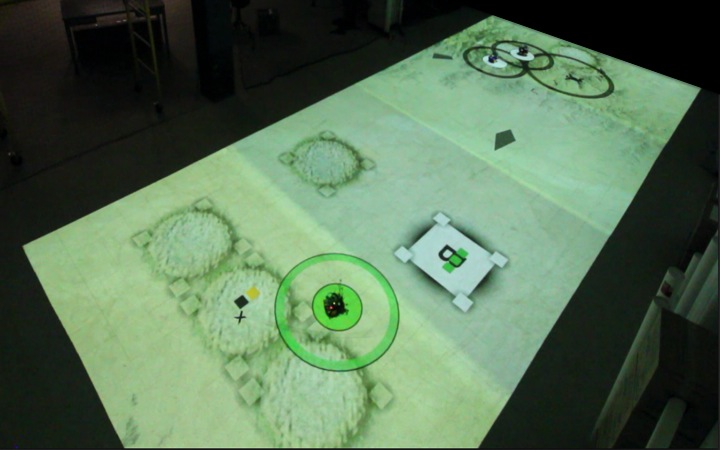}
		\caption{UGV picks up victim from site 5, observes another healthy victim at site}
		\label{fig:seq6}
	\end{subfigure}
    \hfill
	\begin{subfigure}[t]{0.22\textwidth}
		\centering
		\includegraphics[width=0.9\textwidth]{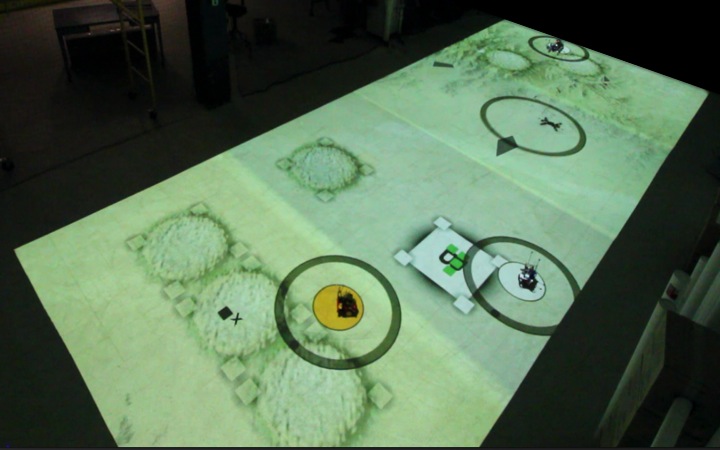}
		\caption{UGV picks up a victim from site 5, observes no more healthy victims at site 5}
		\label{fig:seq7}
	\end{subfigure}
    \hfill
	\begin{subfigure}[t]{0.22\textwidth}
		\centering
		\includegraphics[width=0.9\textwidth]{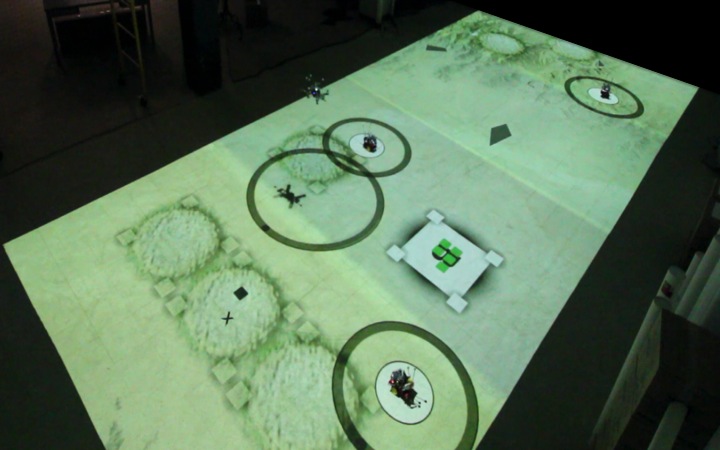}
		\caption{All healthy victims have been saved. End of experiment}
		\label{fig:seq8}
	\end{subfigure}    
	\vskip-0.1in 
	\caption{Overview of hardware domain, with 1 UAV and 3 UGVs. Ceiling-mounted projection system used to visualize sites and victim locations/health state.}
	\label{fig:snapshots_progression}
\vskip-0.3in
\end{figure*}



While simulation results validate that the proposed MacDec-POMDP search algorithm achieves better performance than state-of-the-art solvers, we also verify the approach on a SAR mission with real robots. This allows further learning from realworld experiences. A video demo is made available online\footnote{Video URL: \url{https://youtu.be/B3b60VqWMIE}}.  Learning from simulation allows robots operate in a reasonable (safe) way, whereas real robots experiments can potentially provide "realworld" experiences that are not fully captured by the simulators, hence allowing the robots to improve their baseline policy (learned from simulators). The video essentially demonstrates this potential, assuming the training data is collected from the ``realworld".

A DJI F330 quadrotor is used as the UAV for hardware experiments, with a custom autopilot for low-level control and an NVIDIA Jetson TX1 for high-level planning and task allocation (Fig.~\ref{fig:quad_overview}). The UGVs are Duckiebots~\cite{paull2017duckietown}, which are custom-made ground robots with an onboard Raspberry Pi 2 for computation (Fig.~\ref{fig:ugv_overview}). Experiments were conducted in a 40 ft. $\times$ 20 ft. flight space with a ceiling-mounted projection system~\cite{omidshafiei2016measurable} used to visualize site locations, obstacles, and victims. As discussed earlier, limited communication occurs between robots, with a motion capture system used to ensure adherence to maximal inter-robot communication distances. 
    

The hardware experiments conducted demonstrated that the policy generated from iSEM (with $\rho>75$, $K>100$, $Q_n>3$) was able to save all victims consistently well, despite robots having to adhere to collision avoidance constraints. In some instances, the robots were not able to save all 6 victims. However, in these scenarios, only 1 victim was lost, with the cause of loss due to an extremely low starting health for multiple victims. In such cases, an early victim death would occur before any robot could respond. 

Fig.~\ref{fig:snapshots_progression} shows the progression of one hardware trial. Sites are randomly populated with 6 victims total. All robots initiate at the muster site (Fig.~\ref{fig:seq1}). As the UGVs navigate towards sites (dictated by their policy), they simultaneously begin observing their surroundings. When they do, the outer ring surrounding them turns into the color of the latest victim observed (Fig.~\ref{fig:seq2}). A UGV can only pick up a new victim if it is not currently carrying a victim. Its inner circle then indicates the health of the victim it is carrying, while its outer ring indicates the health of a randomly-selected victim still present at the site (if any). Fig.~\ref{fig:seq3} illustrates a situation where no more victims are present at site 6, thereby causing the UGV's outer ring to turn black (no victims to save at latest encountered site). Note that an observed deceased victim also falls under this category. After a UGV picks up a victim, it drops it off at the muster (Fig.~\ref{fig:seq4}). The victim returns to full health, indicating a successful rescue. When a UAV visits a site, its outer ring also turns into the color of the victim observed at the site (Fig.~\ref{fig:seq5}). The UAV has no inner circle because it cannot pick up victims. Fig.~\ref{fig:seq6} and ~\ref{fig:seq7} show two more instances of a UGV picking up a victim from site 5. As mentioned before, a deceased victim results in a observation color of black in Fig.~\ref{fig:seq7}. Fig.~\ref{fig:seq8} shows the end of the hardware trial, where all healthy victims have been rescued.

\section{Conclusion}
This paper presents iSEM, an efficient algorithm which improves the state-of-the-art learning-based methods for coordinating multiple robots operating in partially observed environments. iSEM enables cooperative sequential decision making under uncertainty by modeling the problem as a MacDec-POMDP and using iterative sampling based Expectation Maximization trials to automatically learn macro-action FSCs. The proposed algorithm is demonstrated to address local convergence issues of the state-of-the-art macro-action based reinforcement learning approach, PoEM. Moreover, simulation results showed that iSEM is able to generate higher-quality solutions with fewer demonstrations than PoEM. 
The iSEM policy is then applied to a hardware-based multi-robot search and rescue domain, where we demonstrate effective control of a team of distributed robots to cooperate in this partially observable stochastic environment. In the future, we will make our demonstration even closer to real world scenarios by modeling observations and communications as actions and assigning costs. We will also experiment with other methods other than random sampling, such as active sampling for the resampling step in iSEM, to accommodate restrictions of computational resources (i.e., number of threads). 
 
{	
	\bibliographystyle{IEEEtran}
    \small
	\bibliography{iros}

\begin{thebibliography}{10}
\providecommand{\url}[1]{#1}
\csname url@rmstyle\endcsname
\providecommand{\newblock}{\relax}
\providecommand{\bibinfo}[2]{#2}
\providecommand\BIBentrySTDinterwordspacing{\spaceskip=0pt\relax}
\providecommand\BIBentryALTinterwordstretchfactor{4}
\providecommand\BIBentryALTinterwordspacing{\spaceskip=\fontdimen2\font plus
\BIBentryALTinterwordstretchfactor\fontdimen3\font minus
  \fontdimen4\font\relax}
\providecommand\BIBforeignlanguage[2]{{%
\expandafter\ifx\csname l@#1\endcsname\relax
\typeout{** WARNING: IEEEtran.bst: No hyphenation pattern has been}%
\typeout{** loaded for the language `#1'. Using the pattern for}%
\typeout{** the default language instead.}%
\else
\language=\csname l@#1\endcsname
\fi
#2}}

\bibitem{grayson2014search}
S.~Grayson, ``{Search \& Rescue using Multi-Robot Systems},''
  \url{http://www.maths.tcd.ie/~graysons/documents/COMP47130_SurveyPaper.pdf},
  2014.

\bibitem{dresner2006multiagent}
K.~Dresner and P.~Stone, ``Multiagent traffic management: Opportunities for
  multiagent learning,'' in \emph{Learning and Adaption in Multi-Agent
  Systems}.\hskip 1em plus 0.5em minus 0.4em\relax Springer, 2006, pp.
  129--138.

\bibitem{bernstein2001planetary}
D.~Bernstein, S.~Zilberstein, R.~Washington, and J.~Bresina, ``Planetary rover
  control as a markov decision process,'' in \emph{Sixth Int'l Symposium on
  Artificial Intelligence, Robotics, and Automation in Space}, 2001.

\bibitem{jolly2007intelligent}
K.~Jolly, K.~Ravindran, R.~Vijayakumar, and R.~S. Kumar, ``Intelligent decision
  making in multi-agent robot soccer system through compounded artificial
  neural networks,'' \emph{Robotics and Autonomous Systems}, vol.~55, no.~7,
  pp. 589--596, 2007.

\bibitem{gath2016optimizing}
M.~Gath, \emph{Optimizing transport logistics processes with multiagent
  planning and control}.\hskip 1em plus 0.5em minus 0.4em\relax Springer, 2016.

\bibitem{Book16}
F.~A. Oliehoek and C.~Amato, \emph{A Concise Introduction to Decentralized
  POMDPs}.\hskip 1em plus 0.5em minus 0.4em\relax Springer, 2016.

\bibitem{amato2014planning}
C.~Amato, G.~D. Konidaris, and L.~P. Kaelbling, ``Planning with macro-actions
  in decentralized {POMDP}s,'' in \emph{Proc. of the int'l Conf. on Autonomous
  agents and multi-agent systems (AAMAS-14)}, 2014.

\bibitem{AmatoICRA15}
C.~Amato, G.~Konidaris, G.~Cruz, C.~Maynor, J.~How, and L.~Kaelbling,
  ``Planning for decentralized control of multiple robots under uncertainty,''
  in \emph{2015 IEEE Int'l Conf. on Robotics and Automation (IRCA)}, 2015.

\bibitem{CA:RSS15}
C.~Amato, G.~Konidaris, A.~Anders, G.~Cruz, J.~P. How, and L.~P. Kaelbling,
  ``Policy {S}earch for {M}ulti-{R}obot {C}oordination under {U}ncertainty,''
  in \emph{Proc. of the 2015 Robotics: Science and Systems Conference
  (RSS-15)}, 2015.

\bibitem{ShayeganICRA15}
S.~Omidshafiei, A.~akbar Agha-mohammadi, C.~Amato, and J.~P. How,
  ``Decentralized control of partially observable {M}arkov decision processes
  using belief space macro-actions.'' in \emph{2015 IEEE International
  Conference on Robotics and Automation (ICRA)}.

\bibitem{omidshafiei2016graph}
S.~Omidshafiei, A.-a. Agha-mohammadi, C.~Amato, S.-Y. Liu, J.~P. How, and
  J.~Vian, ``Graph-based cross entropy method for solving multi-robot
  decentralized pomdps,'' in \emph{2016 IEEE International Conference on
  Robotics and Automation (ICRA)}, pp. 5395--5402.

\bibitem{Liu:AAAI16}
M.~Liu, C.~Amato, E.~Anesta, J.~Griffith, and J.~How, ``Learning for
  decentralized control of multiagent systems in large, partially-observable
  stochastic environments,'' in \emph{AAAI Conf. on Artificial Intelligence},
  2016.

\bibitem{Bernstein:MOR-2002}
D.~S. Bernstein, R.~Givan, N.~Immerman, and S.~Zilberstein, ``The complexity of
  decentralized control of {M}arkov decision processes,'' \emph{Mathematics of
  Operations Research}, vol.~27, no.~4, pp. 819--840, 2002.

\bibitem{sutton1999between}
R.~S. Sutton, D.~Precup, and S.~Singh, ``Between {MDP}s and semi-{MDP}s: A
  framework for temporal abstraction in reinforcement learning,''
  \emph{Artificial intelligence}, vol. 112, no.~1, pp. 181--211, 1999.

\bibitem{kaelbling1998planning}
L.~P. Kaelbling, M.~L. Littman, and A.~R. Cassandra, ``Planning and acting in
  partially observable stochastic domains,'' \emph{Artificial intelligence},
  vol. 101, no.~1, pp. 99--134, 1998.

\bibitem{amato2010finite}
C.~Amato, B.~Bonet, and S.~Zilberstein, ``Finite-state controllers based on
  {M}ealy machines for centralized and decentralized {POMDP}s.'' 2010.

\bibitem{kumar2015probabilistic}
A.~Kumar, S.~Zilberstein, and M.~Toussaint, ``Probabilistic inference
  techniques for scalable multiagent decision making,'' \emph{Journal of
  Artificial Intelligence Research}, vol.~53, no.~1, pp. 223--270, 2015.

\bibitem{Song:AAAI16}
Z.~Song, X.~Liao, and L.~Carin, ``Solving {DEC-POMDP}s by expectation
  maximization of value functions,'' 2016.

\bibitem{WU:IJCAI2013}
F.~Wu, S.~Zilberstein, and N.~R. Jennings, ``Monte-{C}arlo expectation
  maximization for decentralized {POMDPs}.'' in \emph{Proc. of the 23rd Int'l
  Joint Conf. on Artificial Intelligence (IJCAI-13)}, 2013.

\bibitem{Liu:IJCAI2015}
M.~Liu, C.~Amato, X.~Liao, J.~P. How, and L.~Carin, ``{S}tick-{B}reaking
  {P}olicy {L}earning in {DEC-POMDP}s,'' in \emph{Proc. of the 24th Int\'l
  Joint Conf. on Artificial Intelligence (IJCAI-15)}, 2015.

\bibitem{wong2011multiple}
C.~Y. Wong, G.~Seet, and S.~K. Sim, ``Multiple-robot systems for {USAR}: key
  design attributes and deployment issues,'' \emph{International Journal of
  Advanced Robotic Systems}, vol.~8, no.~1, pp. 85--101, 2011.

\bibitem{jin2003cooperative}
Y.~Jin, A.~A. Minai, and M.~M. Polycarpou, ``Cooperative real-time search and
  task allocation in {UAV} teams,'' in \emph{Decision and Control, 2003.
  Proceedings. 42nd IEEE Conference on}, vol.~1, 2003, pp. 7--12.

\bibitem{turra2004fast}
D.~Turra, L.~Pollini, and M.~Innocenti, ``Fast unmanned vehicles task
  allocation with moving targets,'' in \emph{Decision and Control, 2004. CDC.
  43rd IEEE Conference on}, vol.~4, 2004, pp. 4280--4285.

\bibitem{choi2009consensus}
H.-L. Choi, L.~Brunet, and J.~P. How, ``Consensus-based decentralized auctions
  for robust task allocation,'' \emph{IEEE transactions on robotics}, vol.~25,
  no.~4, pp. 912--926, 2009.

\bibitem{Kumar:UAI10}
A.~Kumar and S.~Zilberstein, ``Anytime planning for decentralized {POMDPs}
  using expectation maximization,'' in \emph{Proc. of the 26th Conf. on
  Uncertainty in Artificial Intelligence (UAI-10)}, 2010.

\bibitem{horst2013handbook}
R.~Horst and P.~M. Pardalos, \emph{Handbook of global optimization}.\hskip 1em
  plus 0.5em minus 0.4em\relax Springer Science \& Business Media, 2013,
  vol.~2.

\bibitem{simon2013evolutionary}
D.~Simon, ``Evolutionary optimization algorithms: biologically-inspired and
  population-based approaches to computer intelligence. hoboken,'' 2013.

\bibitem{ROS2009}
M.~Quigley, K.~Conley, B.~P. Gerkey, J.~Faust, T.~Foote, J.~Leibs, R.~Wheeler,
  and A.~Y. Ng, ``Ros: an open-source robot operating system,'' in \emph{ICRA
  Workshop on Open Source Software}, 2009.

\bibitem{van2008reciprocal}
J.~Van~den Berg, M.~Lin, and D.~Manocha, ``Reciprocal velocity obstacles for
  real-time multi-agent navigation,'' in \emph{2008 IEEE Int'l Conf. on
  Robotics and Automation (ICRA)}, pp. 1928--1935.

\bibitem{mellinger2012trajectory}
D.~Mellinger, N.~Michael, and V.~Kumar, ``Trajectory generation and control for
  precise aggressive maneuvers with quadrotors,'' \emph{The International
  Journal of Robotics Research}, vol.~32, no.~5, pp. 664--674, 2012.

\bibitem{rviz-ros}
D.~Gossow and W.~Woodall. (2016, nov) {RVIZ}. \url{http://wiki.ros.org/rviz}.

\bibitem{paull2017duckietown}
L.~Paull, J.~Tani, H.~Ahn, J.~Alonso-Mora, L.~Carlone, M.~Cap, Y.~F. Chen,
  C.~Choi, J.~Dusek, Y.~Fang, \emph{et~al.}, ``Duckietown: an open, inexpensive
  and flexible platform for autonomy education and research,'' in \emph{2017
  IEEE Int'l Conf. on Robotics and Automation (ICRA)}.

\bibitem{omidshafiei2016measurable}
S.~Omidshafiei, A.-A. Agha-Mohammadi, Y.~F. Chen, N.~K. Ure, S.-Y. Liu, B.~T.
  Lopez, R.~Surati, J.~P. How, and J.~Vian, ``Measurable augmented reality for
  prototyping cyberphysical systems: A robotics platform to aid the hardware
  prototyping and performance testing of algorithms,'' \emph{IEEE Control
  Systems}, vol.~36, no.~6, pp. 65--87, 2016.

\end{thebibliography}
}
\end{document}